\ifdraft{\linenumbers}{}
\renewcommand{\epsilon}{\ensuremath\varepsilon}
\newcommand{\R} {\mathbb{R}}
\newcommand{\N} {\mathbb{N}} 
\newcommand{\D} {\mathcal{D}} 
\newcommand{\Q} {\mathbb{Q}}
\newcommand{\E} {\mathbb{E}}
\newcommand{\cE} {\mathcal{E}}
\newcommand{\cR} {\mathcal{R}}
\renewcommand{\P} {\mathcal{P}}
\newcommand{\vars}[1]{\ensuremath{\operatorname{Vars}(#1)}\xspace}
\newcommand{\varsn}[2]{\ensuremath{\operatorname{Vars}_{#1}(#2)\xspace}}
\newcommand{\bernoulli}[1]{\ensuremath{\operatorname{Bernoulli}({#1})}}
\newcommand{\normal}[1]{\ensuremath{\operatorname{Normal}({#1})}}
\newcommand{\uniform}[1]{\ensuremath{\operatorname{Uniform}({#1})}}
\newcommand{\s}[1]{{S}(#1)}
\newcommand{\seq}[3][{}]{\langle #2 \rangle_{#3}^{#1}}
\DeclareMathAlphabet{\mathbbb}{U}{bbold}{m}{n}
\providecommand*{\eu}%
{\ensuremath{\mathrm{e}}}
\providecommand*{\iu}%
{\ensuremath{\mathrm{i}}}
\newcommand*\wc{{}\cdot{}}
\newcommand{\timeout}{\textsc{-}}
\begin{document}

\title{Solving Invariant Generation for\\ Unsolvable Loops}
%
%\thanks{Supported by organization x.}

% %
 \author{%
  Daneshvar Amrollahi  \and
  Ezio Bartocci  \and
  George Kenison \and
  Laura Kovács  \and
  {Marcel~Moosbrugger}  \and
  Miroslav Stankovič
  }
% First Author\inst{1}\orcidID{0000-1111-2222-3333} \and
% Second Author\inst{2,3}\orcidID{1111-2222-3333-4444} \and
% Third Author\inst{3}\orcidID{2222--3333-4444-5555}}
% %
\authorrunning{Amrollahi, Bartocci, Kenison, Kovács, Moosbrugger, and Stankovič}
% % First names are abbreviated in the running head.
% % If there are more than two authors, 'et al.' is used.
% %
\institute{TU Wien, Vienna, Austria \\%Princeton University, Princeton NJ 08544, USA \and
% Springer Heidelberg, Tiergartenstr. 17 69121 Heidelberg, Germany
 \email{amrollahi.daneshvar@gmail.com}, \email{ezio.bartocci@tuwien.ac.at},
 \email{george.kenison@tuwien.ac.at}, \email{laura.kovacs@tuwien.ac.at},
 \email{marcel.moosbrugger@tuwien.ac.at}, \email{miroslav.ms.stankovic@gmail.com}.
 }
% \url{http://www.springer.com/gp/computer-science/lncs} \and
% ABC Institute, Rupert-Karls-University Heidelberg, Heidelberg, Germany\\
% \email{\{abc,lncs\}@uni-heidelberg.de}}
% %
 \maketitle              % typeset the header of the contribution
% %
\begin{abstract}

Automatically generating invariants, key to computer-aided analysis of probabilistic and deterministic programs and compiler optimisation, is a challenging open problem.
Whilst the problem is in general undecidable, the goal is settled for restricted classes of loops.
For the class of \emph{solvable} loops, introduced by Kapur and Rodr\'iguez-Carbonell in 2004, one can automatically compute invariants from closed-form solutions of recurrence equations that model the loop behaviour.
In this paper we establish a technique for invariant synthesis for loops that are not solvable, termed \emph{unsolvable} loops.
%This paper is the first to consider the above problem for loops that are not solvable, termed \emph{unsolvable} loops. 
Our approach automatically partitions the program variables and identifies the so-called \emph{defective} variables that characterise unsolvability.
We further present  a novel technique that automatically synthesises  polynomials, in the {defective} variables, that admit closed-form solutions and thus lead to polynomial loop  invariants.
Our implementation and experiments demonstrate both the feasibility and applicability of our approach to both deterministic and probabilistic programs.

% The abstract should briefly summarize the contents of the paper in
% 150--250 words.

\keywords{Invariant Synthesis \and Algebraic Recurrences \and Verification \and Solvable Operators }
 \end{abstract}

 \section{Introduction}

With substantial progress in computer-aided program analysis and automated reasoning, 
several techniques have emerged to automatically synthesise (inductive) loop invariants, thus advancing
a central challenge in the computer-aided verification of programs with loops. 
In this paper, we 
address the problem of automatically generating loop
invariants in the presence of polynomial arithmetic, which is still unsolved.

\emph{Inductive loop invariants}, in the sequel simply \emph{invariants},  are properties that hold before and after every iteration of a loop.
As such, invariants provide the key  inductive arguments for automating the verification of programs; for example, proving correctness of deterministic loops~\cite{Rodriguez04,Kovacs08,Oliveira16,Kincaid18,Humenberger18} and correctness of hybrid and  probabilistic loops~\cite{Huang2014,KaminskiKMO16,BKS19}, or data flow analysis and compiler optimisation~\cite{muller2004computing}.
One challenging aspect in invariant synthesis is the derivation of \emph{polynomial invariants} for arithmetic loops.
Such invariants 
are defined  by polynomial relations 
\(P(x_1, \dots, x_k) = 0\) among the program variables $x_1, \ldots, x_k$. 
While  deriving polynomial invariants is, in general, undecidable~\cite{Ouaknine20}, efficient invariant synthesis techniques emerge when considering restricted classes of polynomial arithmetic in so-called \emph{solvable loops}~\cite{Rodriguez04}, such as loops with (blocks of) affine assignments~\cite{Kovacs08,Oliveira16,Humenberger18,Kincaid18}.

A common approach for constructing polynomial invariants, pioneered by~\cite{Waldinger72,DBLP:journals/cacm/KatzM76}, is to (i) map a loop to a system of recurrence equations modelling the behaviour of program variables; (ii) derive closed-forms for program variables by solving the recurrences; and (iii) compute polynomial invariants by eliminating the loop counter $n$ from the closed-forms.
The central components in this setting follow.
In step (i) a \emph{recurrence operator} is employed to map loops to recurrences, 
which leads to closed-forms for the program variables as \emph{exponential polynomials} in step (ii); that is, each program variable is written as a finite sum of the form $\sum_j P_j(n) \lambda_j^n$ parameterised by the \(n\)th loop iteration for polynomials $P_j$ and algebraic numbers $\lambda_j$. 
From the theory of algebraic recurrences, 
this is the case if and only if the behaviour of each variable obeys a linear recurrence equation with constant coefficients \cite{everest2003recurrence,kauers2011concrete}.
Exploiting this result, 
the class of recurrence operators that can be linearised are called \emph{solvable}~\cite{Rodriguez04}.
Intuitively, a loop with a recurrence operator is solvable if the resulting system of polynomial recurrences exhibits only acyclic non-linear dependencies (see Section~\ref{section:recurrences}). 
However, even simple loops may fall outside the category of solvable operators, but still admit polynomial invariants and closed-forms for combinations of variables.
This phenomenon is illustrated in  Figure~\ref{fig:running-examples} whose recurrence operators are  not solvable (i.e. unsolvable). 
In general, the main obstacle in the setting of unsolvable recurrence operators is the absence of \lq\lq well-behaved" closed-forms for the resulting recurrences.

\begin{figure}[tb]
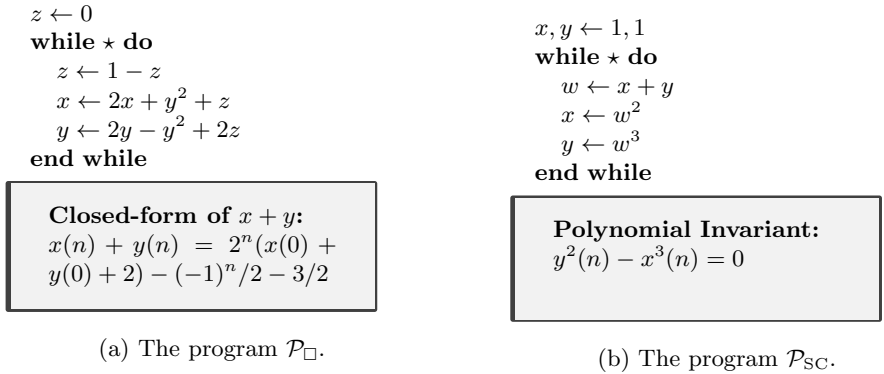

    \begin{subfigure}[c]{0.45\linewidth}
        \begin{algorithmic}
            \STATE \(z \leftarrow 0\)
            \WHILE {$\star$}
                \STATE \(z \leftarrow 1 - z\)
                \STATE \(x \leftarrow 2x + y^2 + z\)
                \STATE \(y \leftarrow 2y - y^2 + 2z\)
            \ENDWHILE
        \end{algorithmic}
        \begin{tcolorbox}[width=0.9\linewidth,boxrule=1pt,leftrule=2pt,arc=0pt,auto outer arc]
        \textbf{Closed-form of \(x+y\):} \\
        $x(n) + y(n) = 2^n(x(0) + y(0) + 2) - (-1)^n/2 - 3/2$
        \end{tcolorbox}
        \caption{The program $\P_\square$.}
        \label{fig:running:squares}
    \end{subfigure}
    \hfill
    \begin{subfigure}[c]{0.45\linewidth}
        \begin{algorithmic}
            \vspace{1em}
                \STATE \(x, y \leftarrow 1, 1\)
            \WHILE {$\star$}
                \STATE \(  w \leftarrow x + y \)
                \STATE \( x \leftarrow w^2 \)
                \STATE \( y \leftarrow w^3 \)
            \ENDWHILE
        \end{algorithmic}
        \begin{tcolorbox}[width=0.9\linewidth,boxrule=1pt,leftrule=2pt,arc=0pt,auto outer arc]
        \textbf{Polynomial Invariant:} \\
        $y^2(n)-x^3(n) = 0$
        \vspace{1em}
        \end{tcolorbox}
        \caption{The program $\P_\textrm{SC}$.}
        \label{fig:running:2}
    \end{subfigure}
        \caption{Two running examples with unsolvable recurrence operators. Nevertheless, $\P_\square$ admits a closed-form for combinations of variables and $\P_\textrm{SC}$ admits a polynomial invariant. Herein we use $\star$ (rather than a loop guard or \texttt{true}) as loop termination is not our focus.}
    \label{fig:running-examples}
\end{figure}

\paragraph{Related Work.} 
To the best of our knowledge, the study of invariant synthesis from the viewpoint of recurrence operators is mostly limited to the setting of solvable operators (or minor generalisations thereof).
In~\cite{Rodriguez04,Rodriguez07} the authors introduce solvable loops and mappings to model loops with (blocks of) affine assignments and propose solutions for steps (i)--(iii) for this class of loops: all polynomial invariants are derived by first solving linear recurrence equations and then eliminating variables based on Gröbner basis computation.
These results have further been generalised in~\cite{Kovacs08,Humenberger18} to handle more generic recurrences; in particular, deriving  arbitrary exponential polynomials as closed-forms of loop variables and allowing restricted multiplication among recursively updated loop variables.
The authors of~\cite{Farzan15,Kincaid18} generalise the setting: they consider more complex programs and devise abstract (wedge) domains to map the invariant generation problem to the problem of solving \emph{C-finite recurrences}. 
(We give further details of this class of recurrences in the Section~\ref{sec:preliminaries}).
All the aforementioned approaches are mainly restricted to C-finite recurrences for which closed-forms always exist, thus yielding loop invariants. 
In \cite{BKS19,bns} the authors establish techniques to apply invariant synthesis techniques developed for deterministic loops to probabilistic programs.
Instead of devising recurrences describing the precise value of variables in step (i), their approach produces C-finite recurrences describing (higher) moments of program variables, yielding moment-based invariants after step (iii).

Pushing the boundaries in analyzing unsolvable loops is addressed in~\cite{Kincaid18,Frohn20}. The approach of~\cite{Kincaid18}  extracts C-finite recurrences over linear combinations of loops variables from unsolvable loops. For example, the method presented in~\cite{Kincaid18}  can also synthesise the closed-forms identified by our work for Figure~\ref{fig:running:squares}. However, 
unlike~\cite{Kincaid18}, our work is not limited to linear combinations but can  extract C-finite recurrences over \emph{polynomial} relations in the loop variables. As such, the technique of~\cite{Kincaid18} cannot synthesise the polynomial loop invariant in Figure~\ref{fig:running:2}, whereas our work can.
A further related approach to our work is given in~\cite{Frohn20}, yet in the setting of loop termination. However, our work is not restricted to solvable loops that are  triangular,  but  can handle mutual dependencies among (unsolvable) loop variables, as evidenced in Figure~\ref{fig:running-examples}.

% Short description of results and their usefulness
\paragraph{Our Contributions.} 
In this paper we tackle the problem of {invariant synthesis in the setting of unsolvable recurrence operators}.
We introduce the notions of \emph{effective} and \emph{defective} program variables where, figuratively speaking, the defective variables are those \lq\lq responsible\rq\rq\ for unsolvability. Our main contributions are summarized below. 
    \begin{enumerate}
        \item Crucial for our synthesis technique is our  novel characterisation of unsolvable recurrence operators in terms of defective variables (Theorem~\ref{thm:def-char}).
        Our approach  complements existing techniques in loop analysis, by extending these methods to the setting of `unsolvable' loops.
        \item On the one hand, defective variables do not generally admit closed-forms.
                On the other hand, some polynomial combinations of such variables are well-behaved (see e.g., Figure~\ref{fig:running-examples}). 
                We show how to compute the set of defective variables in polynomial time (Algorithm~\ref{alg:nmcv-alg}).
        \item We introduce a new technique to synthesise valid polynomial relations among defective variables such that these relations  admit closed-forms, from which polynomial loop invariants follow (Section~\ref{sec:invariants}).
        \item We provide a fully automated approach in the tool \texttt{Polar}\footnote{\url{https://github.com/probing-lab/polar}}. Our experiments demonstrate the feasibility of invariant synthesis for `unsolvable' loops and the applicability of our approach to deterministic loops, probabilistic models, and biological systems (Section~\ref{section:experiments}).\\
    \end{enumerate}

\paragraph{Beyond Invariant Synthesis.}
We believe our work can provide new solutions towards compiler optimisation challenges. \emph{Scalar evolution\footnote{\url{https://llvm.org/docs/Passes.html}}} is a technique to detect general induction variables.  
Scalar evolution and general induction variables are used for a multitude of compiler optimisations, for example inside the LLVM toolchain~\cite{LattnerA04}. 
On a high-level, general induction variables are loop variables that satisfy linear recurrences.
As we show in our work, defective variables do not satisfy linear recurrences in general;  hence,  scalar evolution optimisations cannot be applied upon them. 
However, some polynomial combinations of defective variables \emph{do} satisfy linear recurrences, opening up thus the venue to apply scalar evolution techniques over such defective variables. Our work automatically computes 
  polynomial combinations of some defective loop variables, 
 potentially thus contributing towards  enlarging the class of loops that, for example, LLVM can optimise. 

\paragraph{Structure and Summary of Results.} The rest of this paper is organised as follows. 
    We briefly recall preliminary material in Section~\ref{sec:preliminaries}. 

    Section~\ref{section:recurrences} abstracts from concrete recurrence-based approaches to invariant synthesis via recurrence operators. 

    Section~\ref{sec:effective-deffective} introduces effective and defective variables, presents Algorithm~\ref{alg:nmcv-alg} that computes the set of defective program variables in polynomial time, and characterises unsolvable loops in terms of defective variables (Theorem~\ref{thm:def-char}). 

   In Section~\ref{sec:invariants} we present our new technique that synthesises polynomials in defective variables that admit well-behaved closed-forms. 
    We illustrate our approach  with several case-studies in Section~\ref{section:case-studies}, and describe a 
    fully-automated tool support of our work in Section~\ref{section:experiments}. We  also report on accompanying experimental evaluation in Sections~\ref{section:case-studies}--\ref{section:experiments}, and conclude the paper in Section~\ref{sec:conclusion}.

 \section{Preliminaries}
\label{sec:preliminaries}
 
 Throughout this paper, we write $\N$, $\Q$, and $\R$ to respectively denote the sets of natural, rational, and real numbers.
 We write $\overline{\Q}$, the real algebraic closure of $\Q$, to denote the field of real algebraic numbers. %We denote by $n,m\geq 0$ to be a natural number. 
 We  write $\R[x_1,\ldots,x_k]$ and $\overline{\Q}[x_1,\ldots,x_k]$ for the polynomial rings of all polynomials $P(x_1, \ldots, x_k)$ in $k$ variables $x_1,\ldots,x_k$ with coefficients in $\R$ and  $\overline{\Q}$, respectively (with $k\in\N$ and $k\neq 0$). 
 A \emph{monomial} is a monic polynomial with a single term.
 
 For a program $\P$, $\vars\P$ denotes the set of program variables.
 We adopt the following syntax in our examples.
 Sequential assignments in while loops are listed on separate lines (as demonstrated in Figure~\ref{fig:running-examples}).
 In programs where simultaneous assignments are performed, we employ vector notation (as demonstrated by the assignments to the variables \(x\) and \(y\) in program \(\P_{\textrm{MC}}\) in Example~\ref{ex:NLMarkov-1}).

 We refer to a directed graph with $G$, whose edge and vertex (node) sets are respectively denoted via $A(G)$ and $V(G)$. 
 We endow each element of $A(G)$ with a label according to a labelling function $\mathcal{L}$. A \emph{path} in $G$ is a finite sequence of contiguous edges of $G$, whereas a \emph{cycle} in $G$  is a path whose initial and terminal vertices coincide.
 A graph that contains no cycles is \emph{acyclic}.
 In a graph \(G\), if there exists a path from vertex $u$ to vertex $v$, then
 we say that $v$ is \emph{reachable} from vertex $u$ and say that
 \(u\) is a \emph{predecessor} of \(v\).

\paragraph{C-finite recurrences.} 
We recall relevant results on (algebraic) recurrences  and refer to~\cite{everest2003recurrence,kauers2011concrete} for further details.
A \emph{sequence} in $\overline\Q$ is a function $u \colon\N\to\overline\Q$, shortly written also as  $\seq[\infty]{u(n)}{n=0}$ or simply just $\seq{u(n)}{n}$. A \emph{recurrence} for a {sequence} $\seq{u(n)}{n}$ is an equation 	$u(n{+}\ell) = \textrm{Rec}(u(n{+}\ell{-}1), \ldots, u(n{+}1), u(n),n)$, for some function $\textrm{Rec}\colon\R^{\ell+1}\to\R$.
The number $\ell\in\N$ is the \emph{order} of the recurrence.

A special class of recurrences we consider are the   
\emph{linear recurrences with constant coefficients}, in short \emph{C-finite recurrences}. 
A C-finite recurrence for a sequence \(\seq{u(n)}{n}\) is an equation of the form
\begin{equation}\label{eq:CFiniteRec}
	u(n{+}\ell) = a_{\ell{-}1} u(n{+}\ell{-}1) + a_{\ell{-}2} u(n{+}\ell{-}2) + \cdots + a_{0} u(n)
\end{equation}
where  $a_0,\ldots, a_{\ell-1}\in \overline\Q$ are constants and $a_0 \neq 0$.
A sequence \(\seq{u(n)}{n}\) satisfying a C-finite recurrence~\eqref{eq:CFiniteRec} is a \emph{C-finite sequence} and is uniquely determined by its initial values \(u_0=u(0),\ldots, u_{\ell-1}=u(\ell{-}1)\).  
The \emph{characteristic polynomial} associated with the  
C-finite recurrence relation~\eqref{eq:CFiniteRec} is
\begin{equation*}
    x^{n+\ell} - a_{\ell-1} x^{n+\ell-1} - a_{\ell-2} x^{n+\ell-2} - \cdots - a_{0} x^{n}.
\end{equation*}
The terms of a C-finite sequence can be written in a closed-form as exponential polynomials, depending only on $n$ and the initial values of the sequence. That is, if  $\seq{u(n)}{n}$ is determined by a C-finite recurrence~\eqref{eq:CFiniteRec}, then $u(n) = \sum_{k=1}^r P_k(n) \lambda_k^n$ where $P_k(n)\in\overline{\Q}[n]$ and $\lambda_1,\ldots, \lambda_r$ are  the roots of the associated characteristic polynomial. 
Importantly, closed-forms of (systems of) C-finite sequences always exist and are computable \cite{everest2003recurrence,kauers2011concrete}.

\paragraph{Invariants.} 
An inductive loop invariant is a loop property that holds before and after each loop iteration~\cite{Hoare69}. In this paper, whenever we refer to a loop invariant, we mean an inductive loop invariant; in particular, we are interested in \emph{polynomial invariants} the class of invariants given by Boolean combinations of polynomial equations among loop variables. 
There is a minor caveat to our characterisation of (polynomial) loop invariants.
We assume that a (polynomial) invariant consists of a finite number of initial values
together with a closed-form expression of a monomial in the loop variables.
Thus the closed-form of a loop invariant must eventually hold after a (computable) finite number of loop iterations.
Let us illustrate this caveat with the following loop:

\begin{algorithmic}
\STATE \(x \leftarrow 0\)
\WHILE{$\star$}
\STATE \(x \leftarrow 1\)
\ENDWHILE
\end{algorithmic}
Here the loop admits the polynomial invariant given by the initial value 
\(x(0)=1\) of $x$ and the closed-form \(x(n)=1\).
For each \(n\ge 1\), we denote by $x(n)$ the value of loop variable  $x$ at loop iteration $n$.
Herein, we synthesise invariants that satisfy inhomogeneous first-order recurrence relations and it is straightforward to show that each associated closed-form holds for \(n\ge 1\).

\paragraph{Polynomial Invariants and Invariant Ideals.}
A polynomial \emph{ideal} is a subset $I \subseteq \overline{\Q}[x_1,\ldots,x_k]$ with the following properties:
$I$ contains $0$; \(I\) is closed under addition; and if $P \in \overline{\Q}[x_1,\ldots,x_k]$ and $Q \in I$, then $PQ \in I$.
For a set of polynomials $S \subseteq \overline{\Q}[x_1,\ldots,x_k]$, one can define  the \emph{ideal generated by $S$} by
\begin{equation*}
    I(S) := \{ s_1 q_1 + \cdots + s_\ell q_\ell \mid s_i \in S, q_i \in \overline{\Q}[x_1,\ldots,x_k], \ell \in \N \}.
\end{equation*}

Let $\P$ be a program as before.
For  \(x_j\in\vars\P\), let \(\seq{x_j(n)}{n}\) denote the sequence whose
\(n\)th term is given by the value of \(x_j\) in the \(n\)th loop iteration.
The set of polynomial invariants of \(\P\) form an ideal, the \emph{invariant ideal} of \(\P\)~\cite{Rodriguez07}.
If for each program variable \(x_j\) the sequence \(\seq{x_j(n)}{n}\) is C-finite, then a basis for the invariant ideal can be computed as follows.
Let $f_j(n)$ be the exponential polynomial closed-form of variable $x_j$.
The exponential terms $\lambda_1^n, \ldots, \lambda_s^n$ in each of the $f_j(n)$ are replaced by fresh symbols, yielding the polynomials $g_j(n)$.
Next, with techniques from \cite{Kauers2008ComputingTA}, the set $R$ of all polynomial relations among $\lambda_1^n, \ldots, \lambda_s^n$ (that hold for each $n \in \N$) is computed.
Then we express the polynomial relations in terms of the fresh constants, so that we can interpret \(R\) as a set of polynomials.
Thus
\begin{equation*}
    I(\{ x_j - g_j(n) \mid 1 \leq i \leq k \} \cup R) \cap \overline{\Q}[x_1,\ldots,x_k]
\end{equation*}
is precisely the invariant ideal of \(\P\).
Finally,
we can compute a finite basis for the invariant ideal with techniques from Gröbner bases and elimination theory~\cite{Kauers2008ComputingTA}.

 \section{From Loops to Recurrences}
\label{section:recurrences}

Modelling properties of loop variables by algebraic recurrences and solving the resulting recurrences is an established approach in program analysis.
Multiple works~\cite{Kovacs08,Farzan15,Kincaid18,Humenberger17,Humenberger18}  associate a loop variable $x$ with a sequence $\seq{x(n)}{n}$  whose \(n\)th term is given by the value of $x$ in the $n$th loop iteration.
These works are primarily concerned with the problem of representing such sequences via recurrence equations whose closed-forms can be computed automatically, as in the case of C-finite sequences. A closely connected question to this line of research focuses on identifying  classes of loops that can be modelled by solvable recurrences, as advocated in~\cite{Rodriguez04}. To this end,  over-approximation methods for general loops are proposed in~\cite{Farzan15,Kincaid18}  
 such that solvable recurrences can be obtained from (over-approximated) loops. 

In order to formalise the above and similar efforts in  associating  loop variables with recurrences, herein we introduce the concept of a \emph{recurrence operator}, and then \emph{solvable} and \emph{unsolvable operators}. 
Intuitively, a recurrence operator maps program variables to recurrence equations describing some properties of the variables; for instance, the exact values at the $n$th loop iteration~\cite{Rodriguez04,Kovacs08,Farzan15} or statistical moments in probabilistic loops~\cite{BKS19}.

\begin{definition}[Recurrence Operator]
A \emph{recurrence operator} $\cR$ maps $\vars\P$ to the polynomial ring $\R[\operatorname{Vars}_n(\P)]$.
The set of equations $\{ x(n{+}1) = \cR[x] \mid x \in \vars\P \}$ constitutes a polynomial first-order system of recurrences.
We call $\cR$ \emph{linear} if $\cR[x]$ is linear for all $x \in \vars\P$.

One can extend the operator $\cR$ to $\R[\vars\P]$.
Then, with a slight abuse of notation, for $P(x_1, \ldots, x_j) \in \R[\vars\P]$ we define $\cR(P)$ by $P(\cR[x_1], \ldots, \cR[x_j])$.
\end{definition}

For a program $\P$ with recurrence operator $\cR$ and a monomial over program variables $M := \prod_{x \in \vars\P} x^{\alpha_x}$, we denote by $M(n)$ the product of sequences $\prod_{x \in \vars\P} x^{\alpha_x}(n)$.
Given a polynomial $P$ over program variables, $P(n)$ is defined by replacing every monomial $M$ in $P$ by $M(n)$.
For a set $T$ of polynomials over program variables let $T_n := \{ P(n) \mid P \in T \}$.

\begin{example}\label{remark:running2_recs}
Consider the program \(\P_\textrm{SC}\) in Figure~\ref{fig:running:2}.
One can employ a recurrence operator $\cR$ in order to capture the values of the program variables in the $n$th iteration.
For \(v\in\vars{\P_\textrm{SC}}\), \(\cR[v]\) is obtained by bottom-up substitution in the polynomial updates starting with \(v\). As a result, we obtain the following system of recurrences:
\begin{align*}
    w(n{+}1) &= \cR[w] = x(n) + y(n) \\
    x(n{+}1) &= \cR[x] = x(n)^2 + 2x(n)y(n) + y(n)^2\\
    y(n{+}1) &= \cR[y] = x(n)^3 + 3x(n)^2y(n) + 3x(n)y(n)^2 + y(n)^3.
\end{align*}

Similarly, for the program $\P_\square$ of Figure~\ref{fig:running:squares}, we obtain the following system of recurrences:  
\begin{align*}
    z(n{+}1) &= \cR[z] = 1 - z(n) \\
    x(n{+}1) &= \cR[x] = 2x(n) + y(n)^2 - z(n) + 1 \\
    y(n{+}1) &= \cR[y] = 2y(n) - y(n)^2 - 2z(n) + 2.
\end{align*}
\end{example}

\paragraph{Solvable Operators.} 
Systems of linear recurrences with constant coefficients admit computable closed-form solutions as exponential polynomials~\cite{everest2003recurrence,kauers2011concrete}.
This property holds for a larger class of recurrences with polynomial updates, which leads to the notion of \emph{solvability} introduced in ~\cite{Rodriguez04}.
We adjust the notion of solvability to our setting by using recurrence operators. 
In the following definition, we make a slight abuse of notation and order the program variables so that we can transform program variables by a matrix operator.

\begin{definition}[Solvable Operators \cite{Rodriguez04,Oliveira16}]
\label{def:solvability}
The recurrence operator $\cR$ is \emph{solvable} if there exists a partition of $\operatorname{Vars}_n$; that is, $\operatorname{Vars}_n = W_1 \uplus \dots \uplus W_k$ such that for $x(n) \in W_j$,
\begin{equation*}
    \cR[x] = M_j \cdot W_j^\top + P_j(W_1,\dots,W_{j-1})
\end{equation*}
for some matrices $M_j$ and polynomials $P_j$.
A recurrence operator that is not solvable is said to be \emph{unsolvable}.
\end{definition}
This definition captures the notion of solvability in \cite{Rodriguez04} (see the discussion in~\cite{Oliveira16}).

We conclude this section by emphasising the use of (solvable) recurrence operators beyond deterministic loops, in particular relating its use to probabilistic program loops. As evidenced in~\cite{BKS19}, 
recurrence operators  model statistical moments of program variables by essentially focusing on  {solvable} recurrence operators extended with an expectation operator $\E(\wc)$ to derive closed-forms of (higher) moments of program variables, as illustrated below.  

\begin{example}%[non-lin-markov-1]
\label{ex:NLMarkov-1}
Consider the probabilistic program $\P_\textrm{MC}$ of ~\cite{schreuder2019invariants,chakarov2016deductive} modelling a non-linear Markov chain, where $\bernoulli{p}$ refers to a Bernoulli distribution with parameter $p$.
Here the updates to the program variables \(x\) and \(y\) occur simultaneously.
\begin{algorithmic}
\WHILE{$\star$}
\STATE \(s \leftarrow \bernoulli{1/2}\)
\IF{\(s=0\)}
\STATE \(   
\begin{pmatrix}
x\\
y
\end{pmatrix}
\leftarrow
\begin{pmatrix}
x + xy \\
\frac{1}{3}x + \frac{2}{3}y + xy
\end{pmatrix}
\)
\ELSE
\STATE \(   
\begin{pmatrix}
x\\
y
\end{pmatrix}
\leftarrow
\begin{pmatrix}
x + y +  \frac{2}{3}xy \\
2y + \frac{2}{3}xy
\end{pmatrix}
\)
\ENDIF
\ENDWHILE
\end{algorithmic}
One can construct recurrence equations, in terms of the expectation operator \(\E(\wc)\), for this program as follows:
\begin{align*}
    \E(s_{n+1}) &= \tfrac{1}{2} \\
    \E(x_{n+1}) &= \E(x_n) + \tfrac{1}{2} \E(y_n) + \tfrac{5}{6} \E(x_n y_n) \\
    \E(y_{n+1}) &= \tfrac{1}{6}\E(x_n) + \tfrac{4}{3} \E(y_n) + \tfrac{5}{6} \E(x_n y_n).
\end{align*}
\end{example}

 \section{Defective Variables}
\label{sec:effective-deffective}

To the best of our knowledge, existing approaches in loop analysis and invariant synthesis are restricted to solvable recurrence operators. 
In this section, we establish a new characterisation of unsolvable recurrence operators.
Our characterisation pinpoints the program variables responsible for unsolvability, the \emph{defective variables} (see Definition~\ref{def:effective}).
Moreover, we provide a polynomial time algorithm to compute the set of defective variables (Algorithm~\ref{alg:nmcv-alg}), in order to exploit our new characterisation for synthesising invariants in the presence of unsolvable operators in Section~\ref{sec:invariants}.

For simplicity, we limit the discussion in this section to deterministic programs.
We note however that the results presented herein can also be applied to probabilistic programs.
The details of the necessary changes in this respect  are given in Appendix~\ref{sec:appendixB}.

In what follows, we write  $\mathcal{M}_n(\P)$ to denote the set of non-trivial \emph{monomials in $\vars\P$ evaluated at the $n$th loop iteration} so that
\begin{equation*}
\mathcal{M}_n(\P) := \mleft\{ \textstyle{\prod_{x \in \vars\P}} x^{\alpha_x}(n) \mid \exists x\in\vars\P \text{ with } \alpha_x \neq 0 \mright\}.
\end{equation*}
We next  introduce the notions of variable dependency and dependency graph, needed to further characterise defective variables. 

\begin{definition}[Variable Dependency]\label{def:dependency}
Let $\P$ be a loop with recurrence operator $\cR$ and $x,y \in \vars{\P}$.
We say $x$ \emph{depends on} $y$ if $y$ appears in a monomial in $\cR[x]$ with non-zero coefficient.
Moreover, $x$ \emph{depends linearly on} $y$ if all monomials with non-zero coefficients in $\cR[x]$ containing $y$ are linear.
Analogously, $x$ \emph{depends non-linearly on} $y$ if there is a non-linear monomial with non-zero coefficient in $\cR[x]$ containing $y$.
	
Furthermore, we consider the transitive closure for variable dependency. 
If $z$ depends on $y$ and $y$ depends on $x$, then $z$ depends on $x$ and, if in addition, one of these two dependencies is non-linear, then $z$ depends non-linearly on $x$.
We otherwise say the dependency is linear.
\end{definition}

For each program with polynomial updates, we further define a \emph{dependency graph} with respect to a recurrence operator.

\begin{definition}[Dependency Graph]
\label{def:dependency-graph}
Let $\P$ be a program with recurrence operator $\cR$.
The \emph{dependency graph} of $\P$ with respect to $\cR$ is the labelled directed graph $G=(\vars\P, A, \mathcal{L})$ with vertex set \(\vars{\P}\), edge set
$A := \{ (x, y) \mid x, y \in \vars\P \land x \text{ depends on } y \}$, and a function $\mathcal{L} \colon A \to \{ L, N \}$ that assigns a unique \emph{label} to each edge such that
\begin{equation*}
    \mathcal{L}(x, y) = 
    \begin{cases} 
          L & \text{if } x \text{ depends \emph{linearly} on } y, \text{ and} \\
          N & \text{if } x \text{ depends \emph{non-linearly} on } y.
    \end{cases}
\end{equation*}
\end{definition}

In our approach, we
 partition the variables $\vars\P$ of the program $\P$ into two sets:  \emph{effective-} and \emph{defective variables},  denoted by $E(\P)$ and $D(\P)$ respectively.
Our partition builds on the definition of the dependency graph of \(\P\), as follows. 

\begin{definition}[Effective-Defective] \label{def:effective}
A variable $x \in \vars\P$ is \emph{effective} if:
\begin{enumerate}
    \item \label{cond1} $x$ appears in no directed cycle with at least one edge with an $N$ label, and
    \item \label{cond2} $x$ cannot reach a vertex of an aforementioned cycle (as in \ref{cond1}).
\end{enumerate}
A variable is \emph{defective} if it is not effective.
\end{definition}

\begin{example} 
        From the recurrence equations of Example~\ref{remark:running2_recs} for the program $\P_\textrm{SC}$ (see Figure~\ref{fig:running:2}), one obtains the dependencies between the program variables of \(\P_\textrm{SC}\): 
the program variable $w$ depends linearly on both $x$ and $y$, whilst $x$ and $y$ depend non-linearly on each other and on $w$.
By definition, the partition into effective and defective variables is $E(\P_\textrm{SC}) = \emptyset$ and $D(\P_\textrm{SC}) = \{w, x, y\}$. 

 Similarly, we can construct the dependency graph for the program $\P_\square$ from Figure~\ref{fig:running:squares}, as illustrated in Figure~\ref{fig:TC:example}.
We derive  that $E(\P_\square) = \{z\}$ and $D(\P_\square) = \{x, y\}$.
\begin{figure}[htb]
 \centering
 \begin{minipage}[t]{0.32\textwidth}\vspace{0pt}
 \vspace{2.1em}
 \centering
 \resizebox{0.85\textwidth}{!}{%
     \begin{tikzpicture}[node distance={22mm}, thick, main/.style = {draw, circle}] 
         \node[main] (w) {$w$}; 
         \node[main] (x) [below left of=w] {$x$}; 
         \node[main] (y) [below right of=w] {$y$}; 
         \draw[->] (w) edge [bend right=8] node[left] {$L$} (x);
         \draw[->] (x) edge [bend right=8] node[right] {$N$} (w);
         \draw[->] (w) edge [bend right=8] node[left] {$L$} (y);
         \draw[->] (y) edge [bend right=8] node[right] {$N$} (w);
         \draw[->] (x) edge [bend right=8] node[inner sep=2pt,below] {$N$} (y);
         \draw[->] (y) edge [bend right=8] node[inner sep=2pt,above] {$N$} (x);
     \end{tikzpicture}
 }
 \end{minipage}
 \quad
 \begin{minipage}[t]{0.32\textwidth}\vspace{0pt}
 %\vspace{2em}
 \centering
 \resizebox{0.85\textwidth}{!}{%
\begin{tikzpicture}[node distance={22mm}, thick, main/.style = {draw, circle}] 
        \node[main] (z) {$z$}; 
        \node[main] (x) [below left of=z] {$x$}; 
        \node[main] (y) [below right of=z] {$y$}; 
        \path (z) edge [loop above] node {L} (z);
        \draw[->] (x) edge [bend left=8] node[left] {$L$} (z);
        \draw[->] (y) edge [bend right=8] node[right] {$L$} (z);
        \draw[->] (x) edge [bend right=8] node[above] {$N$} (y);
        \path (x) edge [loop below] node {$L$} (x);
        \path (y) edge [loop below] node {$N$} (y);
    \end{tikzpicture}
}
\end{minipage}
 \caption{The dependency graphs for \(\P_\textrm{SC}\) and $\P_\square$ from Figure~\ref{fig:running-examples}.}
 \label{fig:TC:example}
 \end{figure}
\end{example}

The concept of effective, and, especially, defective variables allows us to establish a new characterisation of programs with unsolvable recurrence operators: {\it a recurrence operator is unsolvable if and only if there exists a defective variable} (as stated in Theorem~\ref{thm:def-char} and automated in Algorithm~\ref{alg:nmcv-alg}). We formalise and prove this results via the following three lemmas.

\begin{lemma}\label{lem:unsolvable:1}
Let $\P$ be a program with recurrence operator $\cR$.
If \(D(\P)\) is non-empty,  so that there is at least one defective variable, then $\cR$ is unsolvable.
\end{lemma}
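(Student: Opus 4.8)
The plan is to prove the contrapositive: if $\cR$ is solvable, then $D(\P)$ is empty, i.e.\ every program variable is effective. So I would begin by assuming $\cR$ is solvable and fix a witnessing partition $\operatorname{Vars}_n = W_1 \uplus \cdots \uplus W_k$ as in Definition~\ref{def:solvability}, with $\cR[x] = M_i \cdot W_i^\top + P_i(W_1,\dots,W_{i-1})$ for $x \in W_i$. The whole argument then amounts to reading off how this block structure constrains the edges and labels of the dependency graph $G$ from Definition~\ref{def:dependency-graph}.

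First I would analyse a single edge. Fix $(u,v) \in A$ with $u \in W_i$. Since $v$ must occur in $\cR[u] = M_i \cdot W_i^\top + P_i(W_1,\dots,W_{i-1})$, the variable $v$ lies in some block $W_{i'}$ with $i' \le i$; that is, \emph{every edge points into the same or a strictly earlier block}. Moreover, if $i' = i$ then $v$ can occur only in the linear term $M_i \cdot W_i^\top$, because $P_i$ involves only the strictly earlier blocks $W_1,\dots,W_{i-1}$. Hence $u$ depends \emph{linearly} on $v$, so the edge carries the label $L$. In summary, edges never increase the block index, and every intra-block edge is labelled $L$.

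The key step is then a monovariant argument on the block index along a cycle. For any directed cycle $v_0 \to v_1 \to \cdots \to v_m = v_0$ in $G$, write $i_j$ for the block index of $v_j$; the previous paragraph gives $i_0 \ge i_1 \ge \cdots \ge i_m = i_0$, so this non-increasing sequence returns to its starting value and must be constant. Thus every vertex of the cycle lies in one block and every edge of the cycle is intra-block, hence labelled $L$. Consequently \emph{no} cycle of $G$ contains an $N$-labelled edge. It follows at once that both conditions of Definition~\ref{def:effective} hold vacuously for every variable $x$: there is no $N$-labelled cycle through $x$ (condition~\ref{cond1}) and no such cycle anywhere to be reached (condition~\ref{cond2}). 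Therefore every variable is effective, $D(\P) = \emptyset$, which is the contrapositive of the claim. I expect the main obstacle to be purely the first step — carefully extracting from Definition~\ref{def:solvability} that inter-block edges go strictly backwards while intra-block dependencies are forced to be linear; once that correspondence is pinned down, the cycle argument is immediate.
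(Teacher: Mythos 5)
Your proof is correct and is essentially the paper's own argument recast in contrapositive form: the paper assumes a defective variable together with a solvable partition, observes that the witnessing $N$-labelled cycle must lie entirely within a single block $W_j$ (precisely your monovariant step on block indices, which you usefully spell out where the paper merely asserts it), and then derives the contradiction from the fact that a non-linear dependency can only involve strictly earlier blocks — exactly your observation that intra-block edges are forced to be $L$-labelled. The two writeups differ only in logical packaging (contradiction versus contrapositive), not in substance.
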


\begin{proof}
Let $x \in \vars{\P}$ be a defective variable and $G=(\vars\P, A, \mathcal{L})$ the dependency graph of \(\P\) with respect to a recurrence operator \(\cR\).
Following Definition~\ref{def:effective}, there exists a cycle \(C\) such that $x$ is a vertex visited by or can reach said cycle and, in addition, there is an edge in \(C\) labelled by \(N\).

Assume, for a contradiction, that $\cR$ is solvable.
Then there exists a partition $W_1,\ldots,W_k$ of $\varsn{n}\P$ as described in Definition~\ref{def:solvability}.
Moreover, since \(C\) is a cycle, there exists \(j\in\{1,\ldots, k\}\) such that each variable visited by \(C\) lies in \(W_j\).
Let \((w,y)\in C\) be an edge labelled with \(N\).
Since \(w\) depends on \(y\) non-linearly, and 
        \(\cR[w] = M_j \cdot W_j^\top + P_j(W_1,\ldots, W_{j-1})\)
(by Definition~\ref{def:solvability}),
it is clear that \(y(n) \in W_\ell\) for some \(\ell \neq j\).
We also have that \(y(n) \in W_j\) since \(C\) visits \(y\).
Thus we arrive at a contradiction as $W_1,\ldots,W_k$ is a partition of $\varsn{n}\P$.
Hence \(\cR\) is unsolvable.\qed
\end{proof}

Given a program \(\P\) whose variables are all effective, it is immediate that a pair of distinct mutually dependent variables are necessarily linearly dependent and, similarly, a self-dependent variable is necessarily linearly dependent on itself.
Consider the following binary relation \(\sim\) on program variables:
\begin{equation*}
        x \sim y \iff x = y \lor (x \text{ depends on } y \land y \text{ depends on } x).
\end{equation*}
Thus, any two mutually dependent variables are related by \(\sim\).
Under the assumption that all variables of a program \(\P\) are effective, it is easily seen that \(\sim\) defines an equivalence relation on \(\vars\P\).
The partition of the equivalence classes \(\Pi\) of \(\vars\P\) under \(\sim\) admits the following notion of dependence between equivalence classes: for \(\pi,\hat{\pi}\in \Pi\) we say that
\(\pi\) \emph{depends on} \(\hat{\pi}\) if there exist variables \(x\in\pi\) and \(y\in\hat{\pi}\) such that variable \(x\) depends on variable \(y\).

\begin{lemma}\label{lem:equivalencegraph}
Suppose that all variables of a program \(\P\) are effective.
Consider the graph \(\mathcal{G}\) with vertex set given by the set of equivalence classes \(\Pi\) and edge set \(A' := \{ (\pi, \hat{\pi}) \mid (\pi \neq \hat{\pi}) \land (\pi \text{ depends on } \hat{\pi}) \}\).
Then $\mathcal{G}$ is acyclic.
\end{lemma}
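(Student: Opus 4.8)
The plan is to argue by contradiction: assume $\mathcal{G}$ contains a cycle among equivalence classes and show this forces the existence of a defective variable, contradicting the hypothesis that every variable of $\P$ is effective. Suppose there is a cycle $\pi_1 \to \pi_2 \to \cdots \to \pi_m \to \pi_1$ in $\mathcal{G}$ with the $\pi_i$ pairwise distinct (a cycle in the quotient graph uses distinct vertices by definition of $A'$, since each edge requires $\pi \neq \hat\pi$). By the definition of dependence between classes, for each consecutive pair there exist witnesses: a variable in $\pi_i$ depending on a variable in $\pi_{i+1}$ (indices mod $m$). My first step is to lift this cycle of classes back to an actual directed cycle in the original dependency graph $G$ on $\vars\P$.

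The lifting step is the technical heart of the argument. Within each class $\pi_i$, all variables are mutually dependent by the definition of $\sim$, so between any two variables of $\pi_i$ there is a directed path in $G$ lying entirely inside $\pi_i$. Concretely, if the incoming witness into $\pi_i$ lands on variable $u_i \in \pi_i$ and the outgoing witness leaves from variable $v_i \in \pi_i$, then since $u_i \sim v_i$ there is a path from $v_i$ to $u_i$ through $\pi_i$ (when $u_i = v_i$ this path is trivial). Concatenating these intra-class paths with the single inter-class witness edges $v_i \to u_{i+1}$ around the loop produces a genuine directed cycle $C$ in $G$ that visits vertices of $\pi_1, \dots, \pi_m$ in order and returns to its start.

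Having produced the cycle $C$, I would invoke Definition~\ref{def:effective}: a variable lying on a cycle that contains an $N$-labelled edge is defective. So it remains to locate an $N$-label on $C$. This is where the preliminary observation just before the lemma is used, namely that under the all-effective hypothesis, distinct mutually dependent variables depend on each other only linearly, and every self-dependence is linear. Consequently every intra-class edge of $C$ carries the label $L$ (such edges connect mutually dependent variables of the same class, or are self-loops). Therefore, if any edge of $C$ were labelled $N$, it would have to be one of the inter-class witness edges $v_i \to u_{i+1}$; and since $C$ is a cycle containing at least one such $N$-edge, every variable on $C$ is defective by Definition~\ref{def:effective}(\ref{cond1}), contradicting the hypothesis.

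The main obstacle is therefore to guarantee that the cycle $C$ actually carries an $N$-labelled edge — a cycle all of whose edges are $L$-labelled would not produce a defective variable and would not yield the contradiction. I expect this to be resolved by examining why the classes $\pi_i$ are distinct: since the $\pi_i$ are distinct equivalence classes, the inter-class edge from $\pi_i$ to $\pi_{i+1}$ cannot be reciprocated within $G$ by a path back from $\pi_{i+1}$ to $\pi_i$ that would merge them (otherwise $\pi_i \sim \pi_{i+1}$ and they would coincide). The critical point is that \emph{some} dependence around the loop must be non-linear: if \emph{every} inter-class witness edge were linear, the classes involved would be linearly chained in a cycle, and I would show that a purely $L$-labelled cycle spanning distinct classes already contradicts the acyclicity forced by effectiveness — more directly, a cycle through distinct classes means those classes are mutually dependent, so by the definition of $\sim$ the underlying variables should collapse into a single class, contradicting distinctness. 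Pinning down this collapse precisely, and confirming it forces the $N$-label to appear, is the step requiring the most care.
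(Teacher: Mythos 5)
Your central route has a genuine gap: nothing forces an $N$-labelled edge onto the lifted cycle $C$, and under the lemma's hypothesis one \emph{cannot} appear. You are assuming every variable of $\P$ is effective, so by Definition~\ref{def:effective} no variable lies on (or reaches) a cycle containing an $N$-edge; hence any cycle in the dependency graph $G$ of such a program is necessarily all-$L$-labelled, and all-$L$ cycles are perfectly consistent with effectiveness (e.g.\ a linear swap $x \leftarrow y;\, y \leftarrow x$ gives a $2$-cycle of $L$-edges among effective variables). So the contradiction you are hunting for — ``some variable on $C$ is defective'' — is unavailable in principle, and your closing hope that the collapse argument ``forces the $N$-label to appear'' is false: no $N$-label need appear anywhere. (A smaller unproven sub-claim in the lifting step: that the path witnessing $v_i \to u_i$ inside a class stays \emph{inside} that class does hold, but needs an argument via transitivity; it is moot once the label hunt is dropped.)

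The correct proof is exactly the sentence you relegate to an aside (``more directly, a cycle through distinct classes means those classes are mutually dependent, so by the definition of $\sim$ the underlying variables should collapse into a single class''), and it is the paper's proof: edges of $A'$ join distinct classes and $\mathcal{G}$ has no self-loops; if $\mathcal{G}$ had a cycle, then since dependence is transitively closed (Definition~\ref{def:dependency}) and variables within a class are mutually dependent, chaining around the cycle shows that for any two classes $\pi \neq \hat{\pi}$ on it, any $x \in \pi$ and $y \in \hat{\pi}$ depend on each other, so $x \sim y$ and $\pi = \hat{\pi}$ — contradicting distinctness. (The paper phrases this as: transitivity reduces any cycle to one of length two, whose two classes then merge.) Note that this argument never touches edge labels or defectiveness; the effectiveness hypothesis plays no role in acyclicity itself and is only needed downstream, in Lemma~\ref{lem:unsolvable:2}, to ensure intra-class dependencies are linear so the classes can serve as the blocks $W_j$ of Definition~\ref{def:solvability}. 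So: discard the lifting-plus-$N$-label machinery and promote your ``more directly'' observation to be the entire proof.
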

\begin{proof}
From the definition of \(\mathcal{G}\), it is clear that the graph is directed and has no self-loops.
Now assume, for a contradiction, that \(\mathcal{G}\) contains a cycle.
Since the relation \(\sim\) is transitive, there exists a cycle $C$ in \(\mathcal{G}\) of length two.
Moreover, the variables in a given equivalence class are mutually dependent.
Thus the elements of the two classes in \(C\) are equivalent under the relation \(\sim\), which contradicts the partition into distinct equivalence classes.
Therefore the graph \(\mathcal{G}\) is acyclic, as required.
\qed
\end{proof}

\begin{lemma}\label{lem:unsolvable:2}
Let $\P$ be a program with recurrence operator $\cR$.
If each of the program variables of \(\P\) are effective then $\cR$ is solvable.
\end{lemma}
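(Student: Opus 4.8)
The plan is to prove the converse of Lemma~\ref{lem:unsolvable:1}: assuming all program variables are effective, I must construct an explicit partition $\varsn{n}\P = W_1 \uplus \cdots \uplus W_k$ witnessing solvability in the sense of Definition~\ref{def:solvability}. The natural candidate for this partition is already handed to us by the equivalence relation $\sim$ and Lemma~\ref{lem:equivalencegraph}: the equivalence classes $\Pi$ of $\vars\P$ under $\sim$ form an acyclic dependency graph $\mathcal{G}$. So the strategy is to take the $W_j$ to be (the $n$th-iteration versions of) the equivalence classes, ordered topologically.

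First I would invoke Lemma~\ref{lem:equivalencegraph}, which gives that $\mathcal{G}$ is acyclic, and hence admits a topological ordering of its vertices $\pi_1, \ldots, \pi_k$ such that if $\pi_i$ depends on $\pi_j$ with $i \neq j$, then $j < i$. I would then set $W_i$ to be the set of sequences $\{x(n) : x \in \pi_i\}$, so that these sets partition $\varsn{n}\P$. The heart of the argument is then to verify the defining property of solvability for each variable $x(n) \in W_i$, namely that $\cR[x] = M_i \cdot W_i^\top + P_i(W_1, \ldots, W_{i-1})$. To see this, take any variable $y$ appearing in $\cR[x]$, so that $x$ depends on $y$; then $y$ lies in some class $\pi_j$, and by the topological order either $j = i$ (i.e. $y \in \pi_i$, the same class as $x$) or $j < i$.

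The remaining and crucial point is the \emph{linear} vs.\ \emph{non-linear} distinction. When $y \in \pi_i$ (the same class as $x$), I must argue that $y$ enters $\cR[x]$ only linearly, so that its contribution can be absorbed into the matrix term $M_i \cdot W_i^\top$; whereas when $y$ lies in an earlier class $\pi_j$ with $j<i$, arbitrary (possibly non-linear) occurrences are permitted and are collected into the polynomial $P_i(W_1,\ldots,W_{i-1})$. The key observation I would use is the remark preceding Lemma~\ref{lem:equivalencegraph}: under the assumption that every variable is effective, any two distinct mutually dependent variables depend on each other \emph{linearly}, and any self-dependent variable depends on itself linearly. Since $x$ and $y$ lie in the same class $\pi_i$ exactly when they are mutually dependent (or equal), effectiveness forces the edge $(x,y)$ — and also the self-loop when $x=y$ — to carry the label $L$. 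Hence every within-class dependency is linear, which is precisely what is needed for the matrix form of the within-class part.

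The main obstacle I anticipate is this effectiveness-implies-linear step: I need to rule out a non-linear ($N$-labelled) edge between two variables of the same class $\pi_i$. The argument is that such an edge, together with the mutual dependency defining membership in $\pi_i$, would produce a directed cycle through $x$ and $y$ containing an $N$-labelled edge (either the edge itself if $x=y$ gives a self-loop, or the $N$-edge combined with the return path witnessing $y \sim x$). That cycle would make $x$ and $y$ defective by Definition~\ref{def:effective}, contradicting the hypothesis that all variables are effective. Once this is established, assembling the pieces is routine: the within-class linear contributions form $M_i \cdot W_i^\top$, the strictly-earlier-class contributions form $P_i$, and no later-class variable can appear by the topological ordering, so $\cR$ is solvable as claimed.
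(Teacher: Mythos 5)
Your proposal is correct and follows essentially the same route as the paper's proof: both invoke Lemma~\ref{lem:equivalencegraph} to obtain a topological ordering of the equivalence classes under $\sim$, take the classes (evaluated at iteration $n$) as the partition $W_1, \ldots, W_k$ of $\varsn{n}{\P}$, and observe that within-class dependencies must be linear because all variables are effective, so the form required by Definition~\ref{def:solvability} holds. Your explicit cycle-construction argument for the effectiveness-implies-linearity step (an $N$-labelled edge between mutually dependent variables would close a cycle containing an $N$ edge, contradicting Definition~\ref{def:effective}) merely spells out what the paper leaves as a one-line remark, so it is a welcome elaboration rather than a different approach.
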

\begin{proof}
By Lemma~\ref{lem:equivalencegraph}, the associated graph $\mathcal{G} = (\Pi, A')$ on the equivalence classes of \(\vars\P\) is directed and acyclic.
Thus there exists a topological ordering of $\Pi = \{ \pi_1, \dots, \pi_{|\Pi|} \}$ such that for every $(\pi_i, \pi_j) \in A'$ we have $i > j$.
Thus if $x \in \pi_i$ then $x$ does not depend on any variables in class $\pi_j$ for $j > i$.
Moreover, for each $\pi_i \in \Pi$, if $x, y \in \pi_i$ then $x$ cannot depend on $y$ non-linearly because every variable is effective (and all the variables in $\pi_i$ are mutually dependent).
Thus $\Pi$ evaluated at loop iteration $n$ partitions $\varsn{n}{\P}$ and satisfies the criteria in Definition~\ref{def:solvability}.
We thus conclude that $\cR$ is solvable.\qed
\end{proof}

Together, Lemmas~\ref{lem:unsolvable:1}--\ref{lem:unsolvable:2} yield a new characterisation of unsolvable operators.

\begin{theorem}[Defective Characterisation]\label{thm:def-char}
Let $\P$ be a program with recurrence operator $\cR$, then $\cR$ is unsolvable if and only if \(D(\P)\) is non-empty.
\end{theorem}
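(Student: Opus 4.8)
The plan is to derive Theorem~\ref{thm:def-char} directly as a logical consequence of the three lemmas that precede it, since together they supply both directions of the biconditional. First I would observe that the theorem asserts the equivalence ``$\cR$ is unsolvable $\iff$ $D(\P)$ is non-empty,'' and that a clean way to establish such an ``if and only if'' is to prove the two implications separately, each of which is already packaged as a lemma.

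For the forward-style direction, I would argue the contrapositive of one implication using Lemma~\ref{lem:unsolvable:2}: if $D(\P)$ is empty, then every variable of $\P$ is effective (by the very definition of the effective--defective partition in Definition~\ref{def:effective}), and Lemma~\ref{lem:unsolvable:2} then guarantees that $\cR$ is solvable. Taking the contrapositive yields: if $\cR$ is unsolvable, then $D(\P)$ is non-empty. Conversely, Lemma~\ref{lem:unsolvable:1} states outright that if $D(\P)$ is non-empty then $\cR$ is unsolvable. The two implications are exact converses of one another, so chaining them gives the biconditional. I would phrase the proof in one or two sentences, citing Lemma~\ref{lem:unsolvable:1} for the ``$\Leftarrow$'' direction and Lemma~\ref{lem:unsolvable:2} (contrapositive) for the ``$\Rightarrow$'' direction.

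The only subtlety worth flagging is the role of the partition's totality: a variable is defective precisely when it is not effective, so $D(\P)=\emptyset$ is equivalent to ``all variables are effective,'' which is exactly the hypothesis of Lemma~\ref{lem:unsolvable:2}. I would make this translation explicit, because Lemma~\ref{lem:unsolvable:2} is phrased in terms of all variables being effective rather than in terms of $D(\P)$ being empty, and the reader needs to see that these are literally the same condition.

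I do not expect any genuine obstacle here, since all the mathematical content has been front-loaded into the lemmas; the ``hard part,'' such as it is, was already discharged in Lemma~\ref{lem:unsolvable:1} (building the partition-based contradiction from an $N$-labelled cycle) and in Lemmas~\ref{lem:equivalencegraph}--\ref{lem:unsolvable:2} (constructing the topological ordering on equivalence classes that witnesses solvability). The proof of the theorem itself is therefore purely a matter of assembling these pieces and noting that their hypotheses and conclusions are exact complements, so I would keep it to a short paragraph.
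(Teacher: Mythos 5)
Your proposal is correct and matches the paper exactly: the paper also obtains Theorem~\ref{thm:def-char} immediately by combining Lemma~\ref{lem:unsolvable:1} for the ``$\Leftarrow$'' direction with the contrapositive of Lemma~\ref{lem:unsolvable:2} for the ``$\Rightarrow$'' direction, with all substantive work front-loaded into those lemmas. Your explicit remark that $D(\P)=\emptyset$ is by definition the same condition as ``all variables are effective'' is a fair, if minor, point of added clarity over the paper's one-line assembly.
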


In Algorithm~\ref{alg:nmcv-alg} we provide a polynomial time algorithm that constructs both $E(\mathcal{P})$ and $D(\mathcal{P})$ given a program and a recurrence operator.
We use the initialism ``DFS" for the \emph{depth-first search} procedure.
Algorithm~\ref{alg:nmcv-alg} terminates in polynomial time as both the construction of the dependency graph and depth-first search exhibit polynomial time complexity.
The procedure searches for cycles in the dependency graph with at least one non-linear edge (labelled by \(N\)).
All variables that reach such cycles are, by definition, defective.

\begin{algorithm}[t]
\caption{Construct $E(\P)$ and $D(\P)$ from program $\P$ with operator $\cR$.}
\begin{algorithmic}
    \STATE Let $G = (\vars\P, A, \mathcal{L})$ be the dependency graph of $\P$ with respect to $\cR$.
    \STATE $D(\mathcal{P}) \leftarrow \emptyset$
    \FOR{$(x, y) \in A$ where $\mathcal{L}(x,y) = N$} 
        \IF {$x = y$}
            \STATE $\texttt{predecessor} \leftarrow \emptyset$
            \STATE $\text{DFS}(x, \texttt{predecessor})$
            \STATE $D(\mathcal{P}) \leftarrow D(\mathcal{P}) \cup \texttt{predecessor}$
        \ENDIF
        \IF {$x \neq y$}
            \STATE $\texttt{predecessor} \leftarrow \emptyset$
            \STATE $\text{DFS}(y, \texttt{predecessor})$
            \IF {$x \in \texttt{predecessor}$}
                \STATE $D(\mathcal{P}) \leftarrow D(\mathcal{P}) \cup \texttt{predecessor}$
            \ENDIF
        \ENDIF
    \ENDFOR
    \STATE $E(\mathcal{P}) \leftarrow \vars\P \setminus D(\mathcal{P})$
\end{algorithmic}
\label{alg:nmcv-alg}
\end{algorithm}

In what follows, we focus on programs with unsolvable recurrence operators, or equivalently by Theorem~\ref{thm:def-char}, the case where \(\D(\P) \neq \emptyset \).
The characterisation of unsolvable operators in terms of defective variables and our polynomial algorithm to construct the set of defective variables is the foundation for our approach synthesising invariants in the presence of unsolvable recurrence operators in Section~\ref{sec:invariants}.

\begin{remark}
The recurrence operator \(\mathcal{R}[x]\) for an effective variable \(x\) will admit a closed-form solution for every initial value \(x_0\).
For the avoidance of doubt, the same cannot be said for the recurrence operator of a defective variable.
However, it is possible that a set of initial values will lead to a closed-form expression as a C-finite sequence: consider a loop with defective variable \(x\) and update 
 \(x\leftarrow x^2\) and initialisation \(x_0 \leftarrow 0\) or \(x_0\leftarrow \pm 1\).
\end{remark}

 \section{Synthesising Invariants}
\label{sec:invariants}

In this section we propose a new technique to {\it synthesise invariants for programs with unsolvable recurrence operators}.
The approach is based on our new characterisation of unsolvable operators in terms of defective variables (Section~\ref{sec:effective-deffective}).

For the remainder of this section we fix a program $\P$ with an unsolvable recurrence operator $\cR$, or equivalently with $D(\P) \neq \emptyset$.
We start by extending the notions of \emph{effective} and \emph{defective} from program variables to monomials of program variables.
Let $\mathcal{E}$ be the set of \emph{effective monomials} given by
\[
    \mathcal{E}(\P) = \mleft\{ \prod_{x \in E(\P)} x^{\alpha_x} \mid \alpha_x \in \N \mright\}.
\]
The complement, the \emph{defective monomials}, is given by 
$\mathcal{D}(\P) := \mathcal{M}(\P) \setminus \mathcal{E}(\P)$.
The difficulty with defective variables is that in general they do not admit closed-forms.
However, polynomials of defective variables may allow for closed-forms as illustrated in previous examples.
The main idea of our technique for invariant synthesis in the presence of defective variables is to find such polynomials.
We fix a \emph{candidate polynomial} called $\s{n}$ based on an arbitrary degree $d \in \N$:

\begin{equation}
    \label{eqn:sn-eq}
   \s{n} = \sum_{W \in \D_n(\P)\restriction_d} c_W W, 
\end{equation}
where the coefficients $c_W\in\R$ are unknown real constants.
Our notation \(\D_n(\P)\restriction_d\) indicates the set of \emph{defective monomials of degree at most $d$}.

\begin{example} For $\P_\square$ in Figure~\ref{fig:running:squares} we have
\(\D_n(\P_\square)\restriction_1 = \{x,y\}\),
and \(\D_n(\P_\square)\restriction_2 = \{x,y,x^2, y^2, xy, xz, yz\}\).
\end{example}

On the one hand, all variables in $\s{n}$ are defective; however, $\s{n}$ may admit a closed-form.
This occurs if \(\s{n}\) obeys a ``well-behaved" recurrence equation; that is to say,
an inhomogeneous recurrence equation where the inhomogeneous component is given by a linear combination of effective monomials.
In such instances the recurrence takes the form
\begin{equation}
    \label{eqn:main-rec}
    \s{n{+}1} = \kappa\s{n} + \sum_{M \in \cE_{n}(\P)} c_M M
\end{equation}

where the coefficients $c_M$ are unknown.
Thus an intermediate step towards our goal of synthesising invariants is to determine whether there are constants $c_M, c_W, \kappa \in \R$ that satisfy the above equations.
If such constants exist then we come to our final step: solving a first-order inhomogeneous recurrence relation.
There are standard methods available to solve first-order inhomogeneous recurrences of the form $\s{n{+}1} = \kappa \s{n} + h(n)$, where $h(n)$ is the closed-form of $\sum_{M \in \cE_{n}(\P)}c_M M$, see e.g.,~\cite{kauers2011concrete}.
We note \(h(n)\) is computable and an exponential polynomial since it is determined by a linear sum of effective monomials. 
Thus $\seq{S(n)}{n}$ is a C-finite sequence.

\begin{remark}
Observe that the sum on the right-hand side of equation~\eqref{eqn:main-rec} is finite, since  all but finitely many of the coefficients \(c_M\) are zero.
Further, the coefficient $c_M$ of monomial $M$ is non-zero only if $M$ appears in $\cR[S]$.
\end{remark}

Going further, in equation~\eqref{eqn:main-rec} we express $\s{n{+}1}$ in terms of a polynomial in $\varsn{n}\P$ with unknown coefficients $c_M,c_W$, and $\kappa$.
An alternative expression for $\s{n{+}1}$ in $\varsn{n}\P$ is given by the recurrence operator $\s{n{+}1} = \cR[S]$.
Taken in combination, we arrive at the following formula
\begin{equation*}
    \cR[S] - \kappa\s{n} - \sum_{M \in \cE_{n}(\P)} c_M M = 0, 
\end{equation*}
yielding a polynomial in $\varsn{n}\P$.
Thus all the coefficients in the above formula are necessarily zero as the polynomial is identically zero.
Therefore \emph{all} solutions to the unknowns $c_M, c_W$, and $\kappa$ are computed by solving a (quadratic) system of equations.

\begin{example}\label{ex:squares_solved}
We demonstrate our procedure for invariant synthesis by applying the method to an example.
Recall program $\P_\square$ from Figure~\ref{fig:running:squares}:
\begin{algorithmic}
\STATE \(z \leftarrow 0\)
\WHILE {$\star$}
\STATE \(z \leftarrow 1 - z\)
\STATE \(x \leftarrow 2x + y^2 + z\)
\STATE \(y \leftarrow 2y - y^2 + 2z\)
\ENDWHILE
\end{algorithmic}
From Algorithm~\ref{alg:nmcv-alg} we obtain \(E(\P_\square) = \{z\}\) and \(D(\P_\square) = \{x, y\}\).
Because $D(\P_\square) \neq \emptyset$, we deduce using Theorem~\ref{thm:def-char} that the associated operator \(\cR\) is unsolvable.
Consider the {candidate} \(\s{n} = ax(n) + by(n)\) with unknowns \(a,b\in\R\).
The recurrence for $\s{n}$ given by $\cR$ is
\begin{multline*}
    \s{n{+}1} = \cR[S] = a\cR[x] + b\cR[y] \\ = a + 2b +2ax(n) + 2by(n) -(a+2b)z(n) + (a-b)y^2(n).
\end{multline*}

We next express $\s{n{+}1}$ in terms of an inhomogeneous recurrence equation (cf. equation~\eqref{eqn:main-rec}).
When we substitute for \(\s{n}\), we obtain
\begin{equation*}
\s{n{+}1} = \kappa (ax(n) + by(n)) + (cz(n) + d)
\end{equation*}
where the coefficients in the inhomogeneous component are unknown.
We then combine the preceding two equations (for brevity we suppress the loop counter \(n\) in the program variables \(x,y,z\)) and derive 
\begin{equation*}
(a + 2b - d) + (-a - c - 2b)z + (2a - \kappa a)x + (2b - \kappa b)y + (a - b)y^2 = 0.
\end{equation*}
Thus we have a polynomial in the program variables that is identically zero.
Therefore, all the coefficients in the above equation are necessarily zero.
We then solve the resulting system of quadratic equations, which leads to the non-trivial solution \(a=b\), \(\kappa=2\), \(d=3a\), and \(c=-3a\).
We substitute this solution back into the recurrence for \(\cR[S]\) and find
\begin{equation*}
    \s{n{+}1} = 2\s{n} + 3a(1-z(n)) = 2\s{n} + 3a \frac{1 + (-1)^n}{2}.
\end{equation*}
Here, we have used the closed-form solution \(z(n) = {1}/{2} - {(-1)^n}/{2}\) of the effective variable \(z\).
We can compute the solution of this inhomogeneous first-order recurrence equation.
In the case that \(a=1\),
we have \(\s{n} = 2^n (\s{0} + 2) - {(-1)^n}/{2} -{3}/{2}\). 
Therefore, the following identity holds for each $n \in \N$:
\begin{equation*}
    x(n) + y(n) = 2^n(x(0) + y(0) + 2) - {(-1)^n}/{2} - {3}/{2}
\end{equation*}
and so we have synthesised the closed-form of \(x+y\) for program \(\P_\square\) of Figure~\ref{fig:running:squares}.
\end{example}

\subsection{Solution Space of Invariants for Unsolvable Operators}

Given a program and a recurrence operator, our invariant synthesis technique is relative-complete with respect to the degree $d$ of the candidate $\s{n}$.
This means, for a fixed degree $d \in \N$, our approach is in theory able to compute \emph{all} polynomials of defective variables with maximum degree $d$ that satisfy a ``well-behaved" recurrence; that is, a first-order recurrence equation of the form \eqref{eqn:main-rec}.
This holds because of our reduction of the problem to a system of quadratic equations for which all solutions are computable.
Our technique can also rule out the existence of well-behaved polynomials of defective variables of degree at most $d$ if the resulting system has no (non-trivial) solutions.

Let $\P$ be a program with program variables $\vars\P = \{x_1, \dots, x_k \}$.
The set of polynomials $P$ with $P(x_1(n), \dots, x_k(n)){=}0$ for all $n \in \N$ form an ideal, the \emph{invariant ideal} of $\P$.
The requirement of closed-forms is the main obstacle for computing a basis for the invariant ideal in the presence of defective variables.
Our work introduces a method that includes defective variables in the computation of invariant ideals, via the following steps of deriving the {\it polynomial invariant ideal of an unsolvable loop:}
\begin{itemize}
\item For every effective variable $x_i$, let $f_i(n)$ be its closed-form and assume $h(n)$ is the closed-form for some candidate $S$ given by a  polynomial in defective variables.
\item Let $\lambda_1^n, \ldots, \lambda_s^n$ be the exponential terms in all $f_i(n)$ and $h(n)$.
Replace the exponential terms in all $f_i(n)$ as well as $h(n)$ by fresh constants to construct the polynomials $g_i(n)$ and $l(n)$ respectively.\\
\item  Next, construct the set $R$ of polynomial relations among all exponential terms, as explained in Section~\ref{sec:preliminaries}.
Then, the ideal
\begin{equation*}
     I(\{ x_i - g_i(n) \mid x_i \in E(\P) \} \cup \{ S - l(n) \} \cup R) \cap \overline{\Q}[x_1, \dots, x_k]
\end{equation*}
contains precisely all polynomial relations among program variables implied by the equations $\{ x_i = f_i(n) \} \cup \{ S = g(n) \}$ in the theory of polynomial arithmetic.
\item A finite basis for this ideal is computed using techniques from Gr\"obner bases and elimination theory.
This step is similar to the case of the invariant ideal for solvable loops, see~e.g.,~\cite{Rodriguez04,Kovacs08}.
\end{itemize}
In conclusion, we infer a \emph{finite representation of the ideal of  polynomial invariants for loops with unsolvable recurrence operators}. 
 
 \section{Applications of Unsolvable Operators towards Invariant Synthesis}%Case studies}
\label{section:case-studies}
Our approach automatically generates invariants for programs with defective variables (Section~\ref{sec:invariants}), and pushes the boundaries of both theory and practice of invariant generation:  we introduce and incorporate defective variable analysis  into the state-of-the-art methodology of reasoning about solvable loops, complementing thus existing methods, see e.g.,~\cite{Rodriguez04,Kovacs08,Kincaid18,Humenberger18},  in the area.   As such, the class of unsolvable loops that can be handled by our work extends (aforementioned) existing approaches on polynomial invariant synthesis. The experimental results of our approach (see Section~\ref{section:experiments}) demonstrate the efficiency and scalability of our work in  deriving invariants for unsolvable loops.  
Since our approach to loops via recurrences is generic, 
we can deal with emerging applications of programming paradigms such as: transitions systems and statistical moments in probabilistic programs; and reasoning about biological systems. 
We showcase these applications in this section and also exemplify the limitations of our work.
In the sequel,  we write $\E(t)$ to refer to the expected value of an expression $t$, and denote by $\E(t_n)$ (or \(\E(t(n))\)) the expected value of \(t\) at loop iteration $n$.

\begin{example}[Moments of Probabilistic Programs~\cite{schreuder2019invariants}\label{ex:2:revised}]
Recall the program \(\P_\textrm{MC}\) of  Example~\ref{ex:NLMarkov-1}. 
One can easily verify that \(\E(x_n - y_n) = \frac{5^n}{6^n}(x_0 - y_0)\) and so obtain an invariant for \(\P_\textrm{MC}\). 
Closed-form solutions for higher order expressions are also available; for example,
    \begin{equation*}
    \E((x_n-y_n)^d) = \frac{(2^d + 3^d)^n}{2^n\cdot 3^{dn}} (x_0 - y_0)^d
    \end{equation*}
    refers to the $d$th moment of $x(n)-y(n)$. 
While the work in~\cite{schreuder2019invariants} uses martingale theory to synthesise the above  invariant (of degree 1), 
our approach automatically generates such invariants  over higher-order moments (see Table~\ref{table:experiments2}).
We note to this end that the defective variables in \(\P_\textrm{MC}\) are precisely $x$ and $y$ as can be seen from their mutual non-linear interdependence. Namely,  we have \(D(\P_\textrm{MC}) = \{x, y\}\) and $E(\P_\textrm{MC}) = \{s\}$.
\end{example}

\begin{example}[Biological Systems~\cite{britton2002deciding}]\label{ex:bees}%[bees]
A widely-cited model for the decision-making process of swarming bees choosing one nest-site from a selection of two is introduced in~\cite{britton2002deciding} and further studied in~\cite{dreossi2016parallelotope,sankaranarayanan2020reasoning}.  
An interesting question on this example arises with  computing  probability distributions, yielding answers to 
 probabilistic reachability~\cite{sankaranarayanan2020reasoning}. 
 The (unsolvable) loop below yields a discrete-time model with  five classes of bees (each represented by a program variable) and the proportion of bees in each class changes simultaneously at each loop execution.
Here the coefficient \(\Delta\) indicates the length of the time-step in the discrete-time model and the remaining coefficients parameterise the rates of these changes.
All coefficients here are symbolic (representing any real number).
    \begin{algorithmic}
        \STATE \(
        \begin{pmatrix}
        x\\
        y_1\\
        y_2\\
        z_1\\
        z_2\\
        \end{pmatrix}
        \leftarrow
        \begin{pmatrix}
        \normal{475,5}\\
        \uniform{350,400}\\
        \uniform{100,150}\\
        \normal{35, 1.5}\\
        \normal{35, 1.5}\\
        \end{pmatrix}
        \)
        \WHILE{$\star$}
        \STATE \(
        \begin{pmatrix}
        x\\
        y_1\\
        y_2\\
        z_1\\
        z_2\\
        \end{pmatrix}
        \leftarrow
        \begin{pmatrix}
       x - \Delta(\beta_1 x y_1 +  \beta_2 x y_2) \\
       y_1 + \Delta(\beta_1 x y_1 - \gamma y_1 + \delta\beta_1 y_1 z_1 + \alpha\beta_1 y_1 z_2) \\
       y_2 + \Delta(\beta_2 x y_2 - \gamma y_2 + \delta\beta_2 y_2 z_2 + \alpha\beta_2 y_2 z_1)\\
       z_1 \leftarrow z_1 + \Delta(\gamma y_1 - \delta\beta_1 y_1 z_1 - \alpha\beta_2 y_2 z_1)\\
       z_2 \leftarrow z_2 + \Delta(\gamma y_2 - \delta\beta_2 y_2 z_2 - \alpha\beta_1 y_1 z_2)
        \end{pmatrix}
        \) 
        \ENDWHILE
    \end{algorithmic}

 We note that the  model in~\cite{sankaranarayanan2020reasoning} uses truncated Normal distributions,  as~\cite{sankaranarayanan2020reasoning} is limited to finite supports for the program variables, which is not the case with our work. 
 
 In the loop above, each of the variables exhibit non-linear self-dependence, and so the variables are partitioned into  \(D(\P) = \{x, y_1, y_2, z_1, x_2 \} \) and \(E(\P) = \emptyset \). While the recurrence operator of the loop above is unsolvable, our approach infers  polynomial loop invariants using defective variable reasoning (Section~\ref{sec:invariants}). 
Namely, we generate the following closed-form solutions over expected values of program variables:  
    \begin{align*}
        \E(x(n) + y_1(n) + y_2(n) + z_1(n) + z_2(n)) &= 1045, \\
        \E((x(n) + y_1(n) + y_2(n) + z_1(n) + z_2(n))^2) &= {3277349}/{3}, \quad \text{and} \\
        \E((x(n) + y_1(n) + y_2(n) + z_1(n) + z_2(n))^3) &= 1142497455.
    \end{align*}
    One can interpret such invariants in terms of the biological assumptions in the model.
    Take, for example, the fact that \(\E(x(n) + y_1(n) + y_2(n) + z_1(n) + z_2(n))\) is constant.
    This invariant is in line with the assumption in the model that total population of the swarm is constant.
    In fact, our invariants reflect the behaviour of the system in the original \emph{continuous-time} model proposed in \cite{britton2002deciding}, because our approach is able to process all coefficients (most importantly $\Delta$) as symbolic constants.
 \end{example}
\begin{example}[Probabilistic Transition Systems~\cite{schreuder2019invariants}]\label{ex:pts}%[pts]
    Consider the following probabilistic loop  modelling a \emph{probabilistic transition system} from  \cite{schreuder2019invariants}:

    \begin{algorithmic}
    \WHILE{$\star$}
    \STATE \(
        \begin{pmatrix}
            a \\
            b
        \end{pmatrix} \leftarrow
        \begin{pmatrix}
            \normal{0, 1} \\
            \normal{0, 1}
        \end{pmatrix}
        \)
    \STATE \(
        \begin{pmatrix}
            x \\
            y
        \end{pmatrix} \leftarrow
        \begin{pmatrix}
            x + axy \\
            y + bxy
        \end{pmatrix}
        \)  
    \ENDWHILE
    \end{algorithmic}
    While~\cite{schreuder2019invariants} uses martingale theory to synthesise a degree one invariant of the form \(a \E(x_k) + b \E(y_k) = a \E(x_0) + b \E(y_0)\), our work automatically generates  invariants over higher-order moments involving the defective variables \(x\) and \(y\), as presented in Table~\ref{table:experiments2}. 
\end{example}
We conclude this section with an unsolvable loop whose recurrence operator cannot (yet) be handled by our work. 
\begin{example}[Trigonometric Updates]%[sine-and-cosine]
    As our approach is limited to polynomial updates of the program variables, the loop below cannot be handled by our work: 
    \begin{algorithmic}
    \WHILE{$\star$}
    \STATE \(
    \begin{pmatrix}
        x \\
        y
    \end{pmatrix}
    \leftarrow
    \begin{pmatrix}
        \cos(x) \\
        \sin(x)
    \end{pmatrix}
    \)
    \ENDWHILE
    \end{algorithmic}
    Note the trigonometric functions are transcendental, from which it follows that one cannot generally obtain closed-form solutions for the program variables.
    Nevertheless, this program does admit polynomial invariants in the program variables; for example,  \(x^2 + y^2 = 1\). 
     Although our definition of a defective variables does not apply here, we could say the variable $x$ here is \emph{somehow defective}: while the exact value of \(\sin(x)\) cannot be computed, it could be approximated using  power series. Extending our work with more general notions of defective variables is an interesting line for future work.
\end{example}

\section{Experiments}
\label{section:experiments}

In this section we report on our implementation towards fully automating the analysis of unsolvable loops, and report on our  experimental setting and results.

\paragraph{Implementation.}
Algorithm~\ref{alg:nmcv-alg} together with our method for synthesising invariants involving defective variables is implemented in the tool \texttt{Polar}\footnote{https://github.com/probing-lab/polar}. 
We use  \texttt{python3} and the  \texttt{sympy} package~\cite{10.7717/peerj-cs.103} for symbolic manipulations of algebraic expressions. 

\paragraph{Benchmark Selection.}

While previous works~\cite{Rodriguez04,Rodriguez07,Oliveira16,Aligator18,BKS19,Kincaid18}
consider invariant synthesis, their techniques are only applicable in a restricted setting:
the analysed loops are, for the most part, solvable;
or, for unsolvable loops, the search for polynomial invariants is template-driven or employs heuristics. 
In contrast, the work herein complements and extends the techniques presented for solvable loops in~\cite{Rodriguez04,Rodriguez07,Oliveira16,Aligator18,BKS19,Kincaid18}.
Indeed, our automated approach turns the problem of polynomial invariant synthesis into a decidable problem for
a larger class of unsolvable loops.  

While solvable loops can clearly be analysed by our work, the main 
benefit of our work comes with handling unsolvable loops by translating them into solvable ones. 
For this reason, in our experimentation we are not interested in examples of solvable loops and so only focus on unsolvable loop benchmarks.  
There is therefore no sensible baseline that we can compare against, as state-of-the-art techniques cannot routinely synthesise invariants for unsolvable loops in the generality we present.

In our work we present a set of 15 examples of unsolvable loops, as listed in Table~\ref{table:experiments}\footnote{each benchmark in Table~\ref{table:experiments} references, in parentheses,  the respective example from our paper}. 
Common to all 15 benchmarks from Table~\ref{table:experiments} is the exhibition of circular non-linear dependencies within the variable assignments. We display features of our benchmarks in Table~\ref{table:experiments} (for example, column~3 of Table~\ref{table:experiments} counts the number of defective variables for each benchmark).

Three examples from Table~\ref{table:experiments} are challenging benchmarks taken from the invariant generation literature~\cite{chakarov2016deductive,dreossi2016parallelotope,schreuder2019invariants,sankaranarayanan2020reasoning}; 
full automation in analysing these examples was not yet possible. 
These examples are listed as \texttt{non-lin-markov-1}, \texttt{pts}, and \texttt{bees} in Table~\ref{table:experiments}, respectively corresponding to Example~\ref{ex:NLMarkov-1} (and hence Example~\ref{ex:2:revised}), 
Example~\ref{ex:pts}, and Example~\ref{ex:bees} from Section~\ref{section:case-studies}.

The remaining 12 examples of Table~\ref{table:experiments} are self-constructed benchmarks to highlight the key ingredients of our work in synthesising  invariants associated with unsolvable recurrence operators.  
The benchmarks that do not appear in the main text are listed in Appendix~\ref{sec:appendixA}.

\paragraph{Experimental Setup.} 
We evaluate our approach in \texttt{Polar} on the examples from  Table~\ref{table:experiments}. All our  experiments were performed on a machine with a 1.80GHz Intel i7 processor and 16 GB of RAM.

\paragraph{Evaluation Setting.}
The landscape of benchmarks in the invariant synthesis literature for solvable loops can
appear complex with respect to a large numbers of variables, high degrees in polynomial updates,
and multiple update options.
However, we do not intend to compete on these metrics
for solvable loops. 
The power of Algorithm~\ref{alg:nmcv-alg} lies in  its ability to handle `unsolvable' loop programs:
those with cyclic inter-dependencies and non-linear self-dependencies in the loop body.
While  the benchmarks of Table~\ref{table:experiments} may  be considered {simple},
the fact that previous works  cannot systematically handle such \emph{simple models} crystallises that even simple loops can be unsolvable, limiting the applicability of state-of-the-art methods, as illustrated in the example below.

    \begin{example}
        Consider the question: \emph{does the unsolvable loop program \texttt{deg-9} in Table~\ref{table:experiments} (i.e. Example~\ref{ex:deg-d}) possess an invariant of degree 3?}
        The program variables for \texttt{deg-9} are \(x,y,\) and \(z\).
        The variables \(x\) and \(y\) are  defective. 
        Using  \texttt{Polar}, we derive that 
       the cubic, non-trivial polynomial  \(p(x_n,y_n,z_n)\) given by 
            \begin{multline*}
                 12(ay_n + by_n^2 + cy_n^3 + dx_n +ex_ny_n + fx_ny_n^2) - (3a+24b+117c+2d+17e+26f)x_n^2
                 \\ - (6a-6b+315c+4d-2e+88f)x_n^2y_n + 3(3a-3b+144c+2d-e+35f)x_n^3
            \end{multline*}
            yields a polynomial loop invariant of degree 3, where \(a,b,c,d,e,\) and \(f\) are symbolic constants.
        Moreover, for \(n\ge 1\), the expectation of this polynomial  (\texttt{deg-9} is a probabilistic loop) 
        in the \(n\)th iteration is given by
        \begin{equation*}
            \E(p(x_n,y_n,z_n)) = -108a +312b -1962c -68d +52e -68f.
        \end{equation*}
    \end{example}

\begin{table}[t]
  \setlength{\arraycolsep}{3pt}
  \setlength{\tabcolsep}{2.4pt}
  \fontsize{8}{8}\selectfont
  \centering
  \def\arraystretch{1.25}
  \begin{tabular}{@{}lrrrrrr@{}}
    \toprule
    \textsc{Benchmark} & \textsc{Var} & \textsc{Def} & \textsc{Term} & \textsc{Deg} & \textsc{Cand-7} & \textsc{Eqn-7}  \\
    \midrule
    
    \texttt{squares (Fig.~\ref{fig:running:squares})} & 3 & 2 & 8 & 2 & 35 & 113 \\
    \texttt{squares+ (Ex.\ref{ex:squares+})} & 4 & 2 & 12 & 2 & 35 & 204 \\
    \texttt{non-lin-markov-1 (Ex.~\ref{ex:NLMarkov-1})} & 2 & 2 & 11 & 2 & 35 & 64 \\
    \texttt{non-lin-markov-2 (Ex.~\ref{ex:non-lin-markov-2})} & 2 & 2 & 11 & 2 & 35 & 64 \\
    \texttt{prob-squares (Ex.\ref{ex:prob-squares})} & 4 & 3 & 13 & 2 & 119 & - \\
    \texttt{squares-and-cube (Fig.\ref{fig:running:2})} & 3 & 3 & 4 & 3 & 119 & 337 \\
    \texttt{pts (Ex.~\ref{ex:pts})} & 4 & 2 & 6 & 3 & 35 & 57 \\
    \texttt{squares-squared (Fig.~\ref{ex:squares_solved})} & 4 & 4 & 15 & 4 & 329 & - \\
    \texttt{bees (Ex.~\ref{ex:bees})} & 5 & 5 & 21 & 5 & 791 & - \\
    \texttt{deg-5 (Ex.~\ref{ex:deg-d})} & 3 & 2 & 8 & 5 & 35 & 42 \\
    \texttt{deg-6 (Ex.~\ref{ex:deg-d})} & 3 & 2 & 8 & 6 & 35 & 42 \\
    \texttt{deg-7 (Ex.~\ref{ex:deg-d})} & 3 & 2 & 8 & 7 & 35 & 42 \\
    \texttt{deg-8 (Ex.~\ref{ex:deg-d})} & 3 & 2 & 8 & 8 & 35 & 43 \\
    \texttt{deg-9 (Ex.~\ref{ex:deg-d})} & 3 & 2 & 8 & 9 & 35 & 43 \\
    \texttt{deg-500 (Ex.~\ref{ex:deg-d})} & 3 & 2 & 8 & 500 & 35 & 43 \\
    \bottomrule
  \end{tabular}
  \vspace{0.5em}
  \caption{Features of the benchmarks. \textsc{Var} = Total number of loop variables; \textsc{Def} = Number of defective variables; \textsc{Term} = Total number of terms in assignments; \textsc{Deg} = Maximum degree in assignments; \textsc{Cand-7} = Number of monomials in candidate with degree $7$; \textsc{Eqn-7} = Size of the system of equations associated with a candidate of degree $7$; \textsc{-} = Timeout (60 seconds).}
\label{table:experiments}
\end{table}
\begin{table}[t]
  \setlength{\arraycolsep}{3pt}
  \setlength{\tabcolsep}{2.4pt}
  \fontsize{8}{8}\selectfont
  \centering
  \def\arraystretch{1.25}
  \begin{tabular}{@{}llrrrrrrrrrrrrrrr@{}}
    \toprule
    \multicolumn{1}{c}{} & \phantom{x} & \multicolumn{7}{c}{Candidate Degree}\\ 
    \cmidrule{1-2} \cmidrule{3-9}
    \textsc{Benchmark} && 1 & 2 & 3 & 4 & 5 & 6 & 7  \\
    \midrule
    
    \texttt{squares (Fig.~\ref{fig:running:squares})} && *1.03 & 1.22 & 1.07 & 2.36 & 5.34 & 14.05 & 39.36 & \\
    \texttt{squares+ (Ex.~\ref{ex:squares+})} && *0.88 & 1.06 & 0.90 & 2.14 & 5.89 & 13.85 & 32.51  \\
    \texttt{non-lin-markov-1 (Ex.~\ref{ex:NLMarkov-1})} && *0.46 & *0.94 & *2.25 & *3.84 & *6.45 & *12.29 & *21.35  \\
    \texttt{non-lin-markov-2 (Ex.~\ref{ex:non-lin-markov-2})} && *0.54 & *1.06 & *2.35 & *4.43 & *8.02 & *14.07 & *24.32  \\
    \texttt{prob-squares (Ex.~\ref{ex:prob-squares})} && *0.80 & 0.93 & 4.29 & 22.50 & \timeout & \timeout & \timeout  \\
    \texttt{squares-and-cube (Fig.~\ref{fig:running:2})} && 0.31 & *0.72 & *1.40 & *3.11 & *7.07 & *25.74 & \timeout \\
    \texttt{pts (Ex.~\ref{ex:pts})} && *0.33 & *0.55 & *0.93 & *1.12 & *1.78 & *2.63 & *3.75 \\
    \texttt{squares-squared (Ex.~\ref{ex:squares-squared})} && *0.52 & 1.75 & 10.38 & \timeout & \timeout & \timeout & \timeout \\
    \texttt{bees (Ex.~\ref{ex:bees})} && *0.73 & *4.80 & *53.97 & \timeout & \timeout & \timeout & \timeout  \\
    \texttt{deg-5 (Ex.~\ref{ex:deg-d})} && *0.43 & *0.87 & *1.83 & *4.50 & *9.88 & *22.81 & *45.58 \\
    \texttt{deg-6 (Ex.~\ref{ex:deg-d})} && *0.41 & *0.85 & *1.83 & *4.39 & *10.19 & *23.00 & *44.29 \\
    \texttt{deg-7 (Ex.~\ref{ex:deg-d})} && *0.42 & *0.85 & *1.79 & *4.72 & *10.04 & *25.06 & *47.27 \\
    \texttt{deg-8 (Ex.~\ref{ex:deg-d})} && *0.43 & *0.93 & *1.89 & *4.38 & *10.20 & *23.91 & *49.10 \\
    \texttt{deg-9 (Ex.~\ref{ex:deg-d})} && *0.43 & *0.93 & *1.91 & *4.49 & *10.83 & *22.85 & *51.97 \\
    \texttt{deg-500 (Ex.~\ref{ex:deg-d})} && *0.43 & *0.85 & *1.96 & *4.55 & *9.75 & *23.46 & *50.04 \\
    \bottomrule
  \end{tabular}
  
  \vspace{0.5em}
  \timeout{} = Timeout (60 seconds); {}* = Found invariant of the corresponding degree.
  \vspace{0.5em}
  \caption{The time elapsed to automatically synthesise candidates with closed-forms (results in seconds).
}
     \label{table:experiments2}
\end{table}

\paragraph{Experimental Results.} 
Our experiments using \texttt{Polar} to synthesise invariants are summarised in  Table~\ref{table:experiments2}, using the 
examples of Table~\ref{table:experiments}. 
Patterns in Table~\ref{table:experiments2} show that, if time considerations are the limiting factor, then the greatest impact cannot be attributed to the number of program variables nor the maximum degree in the program assignments (Table~\ref{table:experiments}).
Indeed, time elapsed is not so strongly correlated with either of these program features.
As supporting evidence we note the specific attributes of benchmark \texttt{deg-500} whose assignments include polynomial updates of large degree and yet returns synthesised invariants with relatively low time elapsed in Table~\ref{table:experiments2}.
We note the significantly longer running times associated with the benchmark \texttt{bees} (Example~\ref{ex:bees}).
This suggests that mutual dependencies between program variables in the loop assignment  explain this phenomenon: such inter-relations lead to larger systems of equations needed to construct and then resolve the recurrence equation associated with a candidate.

\paragraph{Experimental Summary.}
Our experiments illustrate the feasibility of synthesising invariants using our approach for programs with unsolvable recurrence operators from various domains such as biological systems, probabilistic loops and classical programs (see Section~\ref{sec:invariants}).
This further motivates the theoretical characterisation of unsolvable operators in terms of defective variables (Section~\ref{sec:effective-deffective}).

\section{Conclusion}
\label{sec:conclusion}

%\paragraph{Summary.}
We establish a new technique that synthesises invariants for loops with unsolvable recurrence operators and show its applicability for deterministic and probabilistic programs.
The technique is based on our new characterisation of unsolvable loops in terms of effective and defective variables: the presence of defective variables is equivalent to unsolvability.
In order to synthesise invariants, we provide an algorithm to isolate the defective program variables and a new method to compute polynomial combinations of defective variables admitting exponential polynomial closed-forms.
The implementation of our approach in the tool \texttt{Polar} and our experimental evaluation demonstrate the usefulness of our alternative characterisation of unsolvable loops and the applicability of our invariant synthesis technique to systems from various domains.

\newpage

 \bibliographystyle{splncs04}
 \bibliography{invariantsbib}

\newpage

\appendix 
\section{Appendix: Index of Benchmarks} \label{sec:appendixA}

Following are the benchmarks from Tables~\ref{table:experiments}--\ref{table:experiments2} that are not listed in the main text.

\begin{example}[\texttt{squares+}] \label{ex:squares+}
\begin{algorithmic}
\STATE \(s, x, y, z \leftarrow 0, 2, 1, 0\)
\WHILE{$\star$}
\STATE \(s \leftarrow \bernoulli{1/2}\)
\STATE \(z \leftarrow z-1\ \{1/2\}\ z + 2\)
\STATE \(x \leftarrow 2x + y^2 + s + z\)
\STATE \(y \leftarrow 2y - y^2 +2s\)
\ENDWHILE
\end{algorithmic}
\end{example}

\begin{example}[\texttt{non-lin-markov-2}] \label{ex:non-lin-markov-2}
\begin{algorithmic}
\STATE \(x,y \leftarrow 0,1\)
\WHILE{$\star$}
\STATE \(s \leftarrow \bernoulli{1/2}\)
\IF{\(s=0\)}
\STATE \(   
\begin{pmatrix}
x\\
y
\end{pmatrix}
\leftarrow
\begin{pmatrix}
\frac{4}{10}(x + xy)\\
\frac{4}{10}({1}{3}x + \frac{2}{3}y + xy)
\end{pmatrix}
\)
\ELSE
\STATE \(   
\begin{pmatrix}
x\\
y
\end{pmatrix}
\leftarrow
\begin{pmatrix}
\frac{4}{10}(x + y +  \frac{2}{3}xy) \\
\frac{4}{10}(2y + \frac{2}{3}xy)
\end{pmatrix}
\)
\ENDIF
\ENDWHILE
\end{algorithmic}
\end{example}

%%%%%%

\begin{example}[\texttt{prob-squares}] \label{ex:prob-squares}
\begin{algorithmic}
\STATE \(g \leftarrow 1\)
\WHILE{$\star$}
\STATE \(g \leftarrow \uniform{g,  2g}\)
\STATE \(
\begin{pmatrix}
a \\
b \\
c
\end{pmatrix} \leftarrow
\begin{pmatrix}
a^2 + 2bc -df +b \\
df -a^2 +2bd +2c \\
g -bc -bd +\frac{1}{2}a
\end{pmatrix}
\)
\ENDWHILE
\end{algorithmic}
\end{example}

%%%%%%%%%%

\begin{example}[\texttt{squares-squared}] \label{ex:squares-squared}
\begin{algorithmic}
\WHILE{$\star$}
\STATE \(
\begin{pmatrix}
x \\
y \\
z \\
m
\end{pmatrix} \leftarrow
\begin{pmatrix}
xyz + x^2 \\
2y + z - x^2 + 3ym z^2 \\
\frac{3}{2}x + \frac{3}{2}z + \frac{1}{2}y + \frac{1}{2}x^2 \\
\frac{2}{3}z +3m - \frac{1}{3}x^2 -\frac{1}{3}xyz -ymz^2
\end{pmatrix}
\)
\ENDWHILE
\end{algorithmic}
\end{example}

%%%%%%%%%%

\newpage

\begin{example}[\texttt{deg-d}] \label{ex:deg-d}
The benchmarks \texttt{deg-5}, \texttt{deg-6}, \texttt{deg-7}, \texttt{deg-8}, \texttt{deg-9}, and \texttt{deg-500} are parameterised by the degree \(d\) in the following program.

\begin{algorithmic}
\STATE \(x, y \leftarrow 1, 1\)
\WHILE{$\star$}
\STATE \(z \leftarrow \normal{0, 1}\)
\STATE \(
    \begin{pmatrix}
        x \\
        y
    \end{pmatrix} \leftarrow
    \begin{pmatrix}
        2x^d + z + z^2 \\
        3x^d + z + z^2 + z^3
    \end{pmatrix}
    \)
\ENDWHILE
\end{algorithmic}
\end{example}

\section{Appendix: Adjusting Algorithm~\ref{alg:nmcv-alg} for Probabilistic Programs} \label{sec:appendixB}

The works \cite{Moosbruggeretal2022,BKS19} defined a recurrence operator for probabilistic loops.
Specifically, the recurrence operator is defined for loops with polynomial assignments, probabilistic choice, and drawing from common probability distributions with constant parameters.
Recurrences for deterministic loops model the precise values of program variables.
For probabilistic loops, this approach is not viable, due to the stochastic nature of the program variables.
Thus the recurrence operator for a probabilistic loop models \emph{(higher) moments} of program variables.
As illustrated in Example~\ref{ex:NLMarkov-1}, the recurrences of a probabilistic loop are taken over expected values of program variable monomials.

In \cite{Moosbruggeretal2022,BKS19}, the authors explicitly excluded the case of circular non-linear dependencies to guarantee computability.
However, in contrast to our notions in Sections~\ref{section:recurrences}, they defined variable dependence not on the level of recurrences but on the level of assignments in the loop body.
To use the notions of effective and defective variables for probabilistic loops, we follow the same approach and base the dependency graph on assignments rather then recurrences.
We illustrate the necessity of this adaptation in the following example.

\begin{example}\label{ex:prob-phenomena}
Consider the following probabilistic loop and associated set of first-order recurrence relations in terms of the expectation operator \(\E(\wc)\):

\medskip
\noindent
\begin{minipage}{\linewidth}
\begin{minipage}[b]{0.47\textwidth}
\begin{algorithmic}
\WHILE{$\star$}
\STATE \(y \leftarrow 4y(1-y)\)
\STATE \(x \leftarrow x-y\ \{1/2\}\ x+y\)
\ENDWHILE
\end{algorithmic}
\end{minipage}
\hfill\vline\hfill
\begin{minipage}[b]{0.47\textwidth}
\begin{align*}
    \E(y_{n+1}) &= 4\E(y_n) - 4\E(y_n^2) \\
    \E(x_{n+1}) &= \E(x_{n}) \\
    \E(x^2_{n+1}) &= \E(x^2_{n}) + \E(y^2_{n+1})
\end{align*}
\end{minipage}
\end{minipage}
\bigskip

It is straightforward to see that variable \(y\) is defective from the deterministic update 
\(y \leftarrow 4y(1-y)\) with its characteristic non-linear self-dependence.
Moreover, $y$ appears in the probabilistic assignment of $x$.
However, due to the particular form of the assignment, the recurrence of $\E(x_n)$ does not contain $y$.
Nevertheless, $y$ appears in the recurrence of $\E(x^2_n)$.
This phenomenon is specific to the probabilistic setting.
For deterministic loops, it is always the case that if the values of a program variable $w$ do not depend on defective variables, then neither do the values of any power of $w$.
\end{example}

In light of the phenomenon exhibited in Example~\ref{ex:prob-phenomena}, for probabilistic loops, we adapt our notion of \emph{variable dependency}.
Without loss of generality, we assume that every program variable has exactly one assignment in the loop body.
Let $\P$ be a probabilistic loop and $x, y \in \vars\P$.
We say $x$ \emph{depends on} $y$, if $y$ appears in the assignment of $x$.
Additionally, the dependency is \emph{linear} if all occurrences of $y$ in the assignment of $x$ are linear, else the dependency is \emph{non-linear}.
Further, we consider the transitive closure of variable dependency analogous to deterministic loops and Definition~\ref{def:dependency}. 

With variable dependency thus defined, the dependency graph and the notions of effective and defective variables follow immediately.
Analogous to our characterisation of unsolvable recurrence operators in terms of defective variables for deterministic loops, \emph{all (higher) moments} of effective variables of probabilistic loops can be described by a system of linear recurrences~\cite{Moosbruggeretal2022,BKS19}.
For defective variables this property will generally fail
For instance, in Example~\ref{ex:prob-phenomena}, the variable $x$ is now classified as defective and $\E(x^2_n)$ cannot be modelled by linear recurrences for some initial values.

The only necessary change to the invariant synthesis algorithm from Section~\ref{sec:invariants} is that instead of program variable monomials, we consider expected values of program variable monomials.
Now, our synthesis technique from Section~\ref{sec:invariants} can also be applied to probabilistic loops to synthesise combinations of expected values of defective variable monomials that do satisfy a linear recurrence.

\end{document}